\def\be{\begin{equation}}
\def\ee{\end{equation}}
\def\bea{\begin{eqnarray}}
\def\eea{\end{eqnarray}}
\def\ben{\begin{equation*}}
\def\een{\end{equation*}}
\def\bean{\begin{eqnarray*}}
\def\eean{\end{eqnarray*}}
\def\bma{\begin{mathletters}}
\def\ema{\end{mathletters}}
\def\bi{\begin{itemize}}
\def\ei{\end{itemize}}
\newtheorem{thm}{Theorem}
\newtheorem{lem}{Lemma}[thm]
\newcommand{\ket}[1]{ | \, #1 \rangle}
\begin{document}
\title{True Multipartite Entanglement Hardy Test}

\author{Ramij Rahaman}\email{ramijrahaman@gmail.com}
\affiliation{Department of Mathematics, University of Allahabad, Allahabad 211002, U.P., India}
\affiliation{Institute of
Theoretical Physics and Astrophysics, University of Gda\'{n}sk,
80-952 Gda\'{n}sk, Poland}
\author{Marcin Wie\'{s}niak}
\affiliation{Institute of Theoretical Physics and Astrophysics,
University of Gda\'{n}sk, 80-952 Gda\'{n}sk, Poland}
\author{Marek \.{Z}ukowski}
\affiliation{Institute of
Theoretical Physics and Astrophysics, University of Gda\'{n}sk,
80-952 Gda\'{n}sk, Poland}

\pacs{03.67.Mn, 03.65.Ud.}

\begin{abstract}
Quantum mechanics allows systems to be entangled with each other, which results in stronger than classical correlations. Many methods of identifying entanglement have been proposed over years, most of which	are based on violating some statistical inequalities. In this work we extend the idea due to Hardy, in which entanglement is not identified with use of statistical inequalities, but by simultaneous satisfaction of certain conditions. We show that the new variant of the Hardy paradox relying on marginal probabilities can be resolved only by true multipartite entangled states. Also, the state resolving this paradox for given local measurements is pure and unique in case of qubit collections.
\end{abstract}
\maketitle
\section{Introduction}
Fifty years ago Bell proved \cite{Bel64} that one can find measurement correlations for a composite quantum system which cannot be described by any local realistic theory (LRT). Such theories can also be called local hidden variable ones (LHVT), or local causal, see e.g. \cite{CASLAV}. The approach of Bell was statistical. Bell's inequalities, in fact, are statistical predictions about some sets of measurements which can be made on particles far separated from each other. A direct contradiction between quantum mechanics and local realism was found in 1989 by Greenberger, Horne and Zeilinger (GHZ) \cite{GHZ89}. In their argumentation they used correlations of a state of four spin-$\frac{1}{2}$ particles $\frac{1}{\sqrt{2}}(\ket{0000}-\ket{1111})$, and remarked that for the three-qubit analog of the state their thesis holds too. Although their proof is direct, it requires at least the eight-dimensional Hilbert space. It works only for the aforementioned states, in contrast to Bell inequalities like \cite{CHSH69}, which can be violated by any pure state and a wide range of mixed ones.

The structure of multipartite entanglement is not a simple extension of the bipartite one. For example, for three qubits there are two different classes of pure genuinely three-partite entanglements, and also one may have entanglement of just two parties. Most of features of bipartite entanglement are well understood, whereas the multipartite entanglement this is still not the case \cite{Sev87,ZBLW02,LPZC04,SS02,CGPRS02,BBGL11,BPBG11, BSV12, CGKR10,YCZLO12}. The rich structure of the multipartite entanglement can be used for various tasks, such as quantum computation \cite{RB01}, quantum simulation \cite{Llo96}, quantum metrology \cite{PS09}. This inspired broad theoretical and experimental studies, \cite{HOR, PAN12}.

In 1992, Hardy \cite{Har92} gave a proof of a no-go theorem for local hidden variables which requires only two qubits and does not require inequalities. We extend the approach of Hardy to more complicated situations, and show that Hardy-type conditions for correlations precisely determine a specific {\em genuine multipartite entangled state}\footnote{The state is not-biseparable with respect to any partition of subsystems.}, which can satisfy them.

By refereing to marginal probabilities, we introduce a novel test of true multipartite entanglement based on Hardy correlations. We demonstrate that our Hardy conditions are satisfied by a unique pure state, which opens many ways to new potential applications in quantum information processing tasks.
\section{Hardy's paradox.} Consider a physical system consisting of
two subsystems shared between two distant parties Alice and Bob. The two observers
(Alice and Bob) have access to one subsystem each. Assume that
Alice can execute experiments measuring one (chosen at
random) of the two dichotomic ($\pm 1$-valued) variables $\hat{u}_1$ and $\hat{v}_1$, and Bob can do the same with
variables $\hat{u}_2$ and $\hat{v}_2$, also chosen at random, and of a similar nature.

The Hardy-type argument \cite{Har92} starts with the following set of four joint probability
conditions:
\be\label{hardy2q}\begin{split}
P(+,+|\hat{u}_1,\hat{u}_2) &= q>0,\\
P(+,+|\hat{v}_1,\hat{u}_2) &= 0,\\
P(+,+|\hat{u}_1,\hat{v}_2) &= 0,\\
P(-,-|\hat{v}_1,\hat{v}_2) &= 0,
\end{split}\ee
where $P(j,k|X_1,X_2)$ denotes the joint probability of observing result, with $j,k=\pm 1$ under local setting $X_i\in \{\hat{u}_i,\hat{v}_i\}$, and we use the convention $\pm$ to denote $\pm1$. This set of conditions cannot be satisfied by any LRT \cite{Har92}. Proof of this follows immediately by showing a
violation by the above probabilities of the following inequality, which holds for local realistic models
\be\label{CH}
\begin{split}
&P(+,+|\hat{v}_1,\hat{u}_2)+P(+,+|\hat{u}_1,\hat{v}_2)\\+&P(-,-|\hat{v}_1,\hat{v}_2)-P(+,+|\hat{u}_1,\hat{u}_2)\geq 0.
\end{split}\ee
The inequality is equivalent to the famous CH inequality \cite{CH74}, or rather its left hand side:
\ben\begin{split}
-1&\leq P(+,+|\hat{v}_1,\hat{u}_2)+P(+,+|\hat{u}_1,\hat{v}_2)+P(+,+|\hat{v}_1,\hat{v}_2)\\&-P(+|\hat{v}_1)-P(+|\hat{v}_2)-P(+,+|\hat{u}_1,\hat{u}_2)\leq 0.\end{split}
\een
The equivalence is reached immediately, once one notices that \ben P(-,-|\hat{v}_1,\hat{v}_2)= P(+,+|\hat{v}_1,\hat{v}_2)-P(+|\hat{v}_1)-P(+|\hat{v}_2) +1.\een
From conditions (\ref{hardy2q}) in Eq. (\ref{CH}) we have $0>-q> 0$.
This contradiction proves that Hardy conditions cannot be satisfied by any local realistic model.
\subsection{Generalization of Hardy's argument} Let us present a generalized form of Hardy's argument. Consider $N$ subsystems shared among $N$ separated parties. Assume that $i$-th party can measure one of two observables, $\hat{u}_i$ and $\hat{v}_i$, on the local subsystem. The outcomes $x_i$ of each such measurement can be $1,2,...,d_i$. Here $d_i$ is the dimension of Hilbert space associated to the $i$-th subsystem. We now consider all the joint probabilities $P(x_1,x_2,...,x_N|\hat{x}_1, \hat{x}_2,...,\hat{x}_N)$, where $\hat{x}_i \in \{\hat{u}_i, \hat{v}_i\}$. A Hardy-type argument \cite{Har92} can start from the following set of conditions:
\begin{widetext}
\begin{eqnarray}\label{HardyG}
& P(1,1,.....,1|\hat{u}_1,\hat{u}_2,.....,\hat{u}_N) = q >0,&\nonumber\\
&\forall r, v_r\neq d_r: P(1,..,1,v_r,1,..,1|\hat{u}_1,..,\hat{u}_{r-1},\hat{v}_r,\hat{u}_{r+1},..,\hat{u}_N) = 0,&\\
&P(d_1,d_2,......,d_N|\hat{v}_1,\hat{v}_2,.....,\hat{v}_N) = 0.&\nonumber
\end{eqnarray}
\end{widetext}
Here, we use the convention $N+1\equiv 1$. This set of conditions cannot be satisfied by any LHVT.

To see this explicitly, let $\lambda$ be a local hidden variable (LHV), fully describing the entire system, taking values from a set $\Omega$ and $\rho(\lambda)$ be the complete state description for the joint system. In a LHVT description there exists conditional probabilities $p(u_j|\hat{u}_j,\lambda)$, $p(v_j|\hat{v}_j,\lambda)$, such that
$P(x_1,x_2,...,x_N|\hat{x}_1, \hat{x}_2,...,\hat{x}_N)= \int_{\lambda \in \Omega} d\lambda\rho(\lambda)~\prod_{j=1}^N p(x_j|\hat{x}_j,\lambda)$, where $\hat{x}_j\in \{\hat{u}_j,\hat{v}_j\}$. Thus, from the first condition in (\ref{HardyG}) we see that there exists a hidden variable subset of $\Omega$ of a non-zero measure, say $\Omega'$, within which for all $i$ one has $p(1|\hat{u}_i,\lambda)\neq 0$, and additionally $\rho(\lambda)\neq 0$. Now the second condition in (\ref{HardyG}) provides us for all $r$, $p(v_r|\hat{v}_r, \lambda)=0$ for all $v_r\neq d_r$ and for all $\lambda$'s in $\Omega'$. As one must have $\sum_{v_r=1}^{d_r}p(v_r|\hat{v}_r,\lambda)=1$, this immediately implies that $p(d_r|\hat{v}_r,\lambda)=1$ for all $\lambda \in \Omega'$. Therefore,
\ben\begin{array} {l}P(d_1,d_2,...,d_N|\hat{v}_1,\hat{v}_2,...,\hat{v}_N)\\= \int_{\lambda \in \Omega} \prod_{r=1}^{N}p(d_r|\hat{v}_r,\lambda)~\rho(\lambda)~d\lambda\\\geq \int_{\lambda \in \Omega'} \prod_{r=1}^{N}p(d_r|\hat{v}_r,\lambda)~\rho(\lambda)~d\lambda\\ = \int_{\lambda \in \Omega'} \rho(\lambda) d\lambda > 0,\end{array}\een
which is in contradiction with the last condition from set (\ref{HardyG}). Hence, conditions (\ref{HardyG}) cannot hold for LHVT. A similar proof is also given in \cite{GKS98} for a three-spins-$\frac{1}{2}$ system.
\section{Modified Hardy's conditions for the general case} One can modify the above conditions (\ref{HardyG}), in a way which we present below.
Consider the following set of joint probability conditions:
\begin{eqnarray}\label{HardyMGG}
& P(1,1,....,1|\hat{u}_1,\hat{u}_2,....,\hat{u}_N) = q >0,&\nonumber\\
& \forall r \mbox{ and $v_r\neq d_r$}:~ P(v_r,1|\hat{v}_r,\hat{u}_{r+1}) = 0,&\\
&P(d_1,d_2,....,d_N|\hat{v}_1,\hat{v}_2,....,\hat{v}_N) = 0,&\nonumber
\end{eqnarray}
where $P(v_r,1|\hat{v}_r,\hat{u}_{r+1})$ denotes the marginal. The new conditions cannot be satisfied by any LHVT.
The proof of this is similar to the previous one. Consider a LHVT as above. From the first condition in (\ref{HardyMGG}) we see that there exists a value range ($\Omega''$, say) of $\Omega$ within which, for all $r$, all the probabilities $p(1|\hat{u}_r,\lambda)$ and $\rho(\lambda)$
are all non-zero. The second condition from (\ref{HardyMGG}) provides us for all $r$, $p(v_r|\hat{v}_r,\lambda)=0$
for all $\lambda$'s in $\Omega''$ and for all $v_r\neq d_r$. This immediately implies that $p(d_r|\hat{v}_r,\lambda)=1$
for all $\lambda \in \Omega''$. Therefore, $P(d_1,d_2,..,d_N|\hat{v}_1,\hat{v}_2,..,\hat{v}_N)>0$, which contradicts the last condition in (\ref{HardyMGG}).

Let us start with introducing the central theorem of this work.
\begin{thm}\label{genuine}
Only a genuine multipartite entangled state satisfies the modified Hardy-type conditions (\ref{HardyMGG}).
\end{thm}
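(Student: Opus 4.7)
I would argue the contrapositive: no biseparable state satisfies (\ref{HardyMGG}). Any biseparable $\rho$ is a convex combination of pure states $\ket{\Psi_\alpha}=\ket{\phi_\alpha}_{A_\alpha}\otimes\ket{\chi_\alpha}_{B_\alpha}$ that are products across some bipartition. The two equality-to-zero conditions in (\ref{HardyMGG}) are linear in $\rho$ and so must hold individually for every component $\ket{\Psi_\alpha}$, while the positivity $q>0$ in the first condition forces at least one component to satisfy that condition with its own strictly positive probability. It therefore suffices to derive a contradiction for an arbitrary pure product state $\ket{\Psi}=\ket{\phi}_A\otimes\ket{\chi}_B$ across some bipartition $A|B$ that obeys all three conditions of (\ref{HardyMGG}) with $q>0$.

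For such a state I would first factorize the opening condition to obtain $P_A(\forall i\in A:\hat u_i=1)>0$ and $P_B(\forall i\in B:\hat u_i=1)>0$, and hence $P(\hat u_i=1)>0$ for every single party $i$. Next, for each cycle-edge $(r,r+1)$ that crosses the bipartition, factorize $P(\hat v_r\neq d_r,\hat u_{r+1}=1)=0$: the positive $\hat u_{r+1}=1$ marginal on its side forces $P_{\mathrm{side}(r)}(\hat v_r=d_r)=1$, so the reduced density matrix of party $r$ is the rank-one projector onto $\ket{d_r}_{\hat v_r}$ and the pure state on that side tensor-factorizes with party $r$ in the state $\ket{d_r}_{\hat v_r}$.

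The core of the argument is to propagate this pinning inward along the cycle within each side. After removing the crossing edges, each side decomposes into disjoint arcs; the last party $r$ of every arc lies just before a crossing and is therefore already pinned by the previous step. Walking backward along the arc, at each internal edge $(r,r+1)$ the middle condition becomes, once party $r+1$ is known to be in the state $\ket{d_{r+1}}_{\hat v_{r+1}}$ as a tensor factor, $P_{\mathrm{rest}}(\hat v_r\neq d_r)\cdot|\langle 1_{\hat u_{r+1}}|d_{r+1}\rangle_{\hat v_{r+1}}|^2=0$; the second factor coincides with the single-party marginal $P(\hat u_{r+1}=1)$ already shown to be positive, so the first must vanish and party $r$ also tensor-factorizes off as $\ket{d_r}_{\hat v_r}$. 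Iterating along every arc on both sides forces $\ket{\Psi}=\bigotimes_{i=1}^n\ket{d_i}_{\hat v_i}$, whence $P(\forall i:\hat v_i=d_i)=1$ in direct contradiction with the third condition of (\ref{HardyMGG}).

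The main obstacles I anticipate are (i) verifying that every arc on every side genuinely admits a pinned starting party, which relies on the combinatorial fact that the cycle crosses any nontrivial bipartition an even and strictly positive number of times, so that each arc indeed terminates at a crossing exiting its side; and (ii) making explicit, at every inductive step, the passage from ``the single-party marginal is the pure state $\ket{d_r}_{\hat v_r}$'' to ``the pure state on that side factorizes with $\ket{d_r}_{\hat v_r}$ as a tensor factor''---an elementary consequence of purity that nevertheless has to be invoked at each rung of the cascade.
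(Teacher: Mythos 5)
Your proposal is correct, but it proves the theorem by a genuinely different route than the paper. You descend to pure product components of the biseparable state and exploit \emph{quantum purity}: a crossing edge $(r,r+1)$ factorizes the condition $P(\hat v_r\neq d_r,\hat u_{r+1}=1)=0$, pins party $r$'s reduced state to the rank-one projector $\proj{v_r=d_r}$, and hence tensor-factors it off; you then cascade this pinning backward along each arc of the cycle, using the already-established positivity of every single-party marginal $P(\hat u_i=1)$, until the whole component collapses to $\bigotimes_i\ket{v_i=d_i}$, contradicting the third condition. The paper instead never leaves the level of probability distributions: it writes the biseparable state's correlations as $\sum_k p_k Q_k R_k$, extracts from the zero conditions that each $Q_k$, $R_k$ individually annihilates the relevant events, and then cascades zeros of the form $Q_{k^*}(\forall j<r:\hat v_j=d_j,\hat u_r=1,\dots,\hat u_m=1)=0$ down to $Q_{k^*}(\forall j\le m:\hat u_j=1)=0$ using only non-negativity and the no-signaling marginal identities (\ref{marginal}). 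The trade-off: the paper's argument is device-independent --- it rules out ``biseparable'' non-signaling boxes, not only quantum states, and needs no purification of the components --- whereas yours is specific to quantum mechanics. In return, your version handles arbitrary (including non-contiguous) bipartitions explicitly via the arc decomposition, where the paper only treats the contiguous cut $(1,\dots,m)$ vs.\ $(m+1,\dots,n)$ and asserts the rest is analogous, and it yields the sharper structural conclusion that any would-be product component is forced to be exactly $\bigotimes_i\ket{v_i=d_i}$. Both obstacles you flag are real but benign: the even, positive crossing number of a nontrivial cut guarantees every arc terminates at a pinned party, and the purity-implies-factorization step is standard; you do implicitly use that each $\hat v_r$ is non-degenerate, which holds in the paper's setting since it has $d_r$ outcomes on a $d_r$-dimensional system.
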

\begin{proof}
Consider state $\rho$ satisfying conditions (\ref{HardyMGG}), which is not genuinely $N$-partite entangled. Thus it is a convex combination of (at least) bi-separable states. Each of them is bi-separable with respect to some cut, say, $(1,2,...,m)$ vs. $(m+1,m+2,...,N)$. As all Hardy conditions are expressed in terms of probabilities, there must be at least one term in the convex combination which gives a non-zero contribution to the first condition, with $q>0$. Assume that such a term has the following form $Q(x_1,...,x_m|\hat{x}_1,...,\hat{x}_m)R(x_{m+1},...,x_N|\hat{x}_{m+1},...,\hat{x}_N)$. All Hardy conditions must hold for this term. One must have: by the first one
\begin{equation}
Q(1,...,1|\hat{u}_1,...,\hat{u}_m)R(1,...,1|\hat{u}_{m+1},...,\hat{u}_N)=q'>0, \label{relation1}
\end{equation}
whereas the middle conditions imply two important relations \ben\begin{split} & Q(v_m\neq d_m|\hat{v}_m)R(1|\hat{u}_{m+1}) =0, \\ \mbox{ and }& R(v_N\neq d_N|\hat{v}_N)Q(1|\hat{u}_1)=0.\end{split}\een

The structure of $Q$: Since by equation (\ref{relation1}) one must have $R(1|\hat{u}_{m+1})\neq 0$, we get $Q(v_m\neq d_m |\hat{v}_m)=0$, which implies that $Q(d_m|\hat{v}_m)=1$. Thus the state of the $m$-th particle in the considered term is pure, and is the eigenstate of operator $\hat{v}_m$ associated with the eigenvalue $d_m$. Therefore
$Q(x_1,...,x_m|\hat{x}_1,...,\hat{x}_m)$ factorizes to certain $Q'(x_1,...,x_{m-1}|\hat{x}_1,...,\hat{x}_m)Q_m(x_m|\hat{x}_m)$. The mid Hardy condition implies also that $Q'(v_{m-1}\neq d_{m-1}|\hat{v}_{m-1})Q_m(1|\hat{u}_m)=0$, however again since $Q_m(1|\hat{u}_m)\neq 0$ one has $Q'(d_{m-1}|\hat{v}_{m-1})=1$.
We have a next factorization of $Q$. Proceeding like this we show that $Q$ fully factorizes.

The structure of $R$: we have $R(v_N\neq d_N|\hat{v}_N)Q(1|\hat{u}_1)=0$, form which follows, in the same way as before that $R(v_N\neq d_N|\hat{v}_N)=0$, thus $R(d_N|\hat{v}_N)=1$. Following the same steps as above, we can show full factorization of $R$.

Thus the state representing the considered term is fully factorizable. Such states admit local hidden variable models, and as such cannot satisfy the modified Hardy condition for $q'>0$. Hence we reach a contradiction with
condition (\ref{relation1}). Since the proof is analogous for all cuts, no mixture of bi-separable states with respect to different cuts can satisfy all Hardy conditions.
\end{proof}
\subsection{Construction of state satisfying conditions (\ref{HardyMGG})}
Can one pinpoint a class of states which satisfy conditions (\ref{HardyMGG}), for specific pairs of local observables? For this purpose we will use a commonly known method, described in Ref. \cite{PG11}. Let us denote the eigenstates of $\hat{u}_j$ and $\hat{v}_j$ as $\ket{u_j}$ and $\ket{v_j}$, respectively, where $u_j,v_j$ denote eigenvalues. Let us now look for all the n-partite product states $|{\phi}_k\rangle = |\eta\rangle_1 |\eta\rangle_2 .......|\eta\rangle_N$, each of which is associated to the zero probabilities given in argument (\ref{HardyMGG}):
\bea\label{product} &|{\phi}_k(x_1,..,x_{r-1},v_r\neq d_r,u_{r+1}=1,x_{r+2},..,x_N)\rangle &\nonumber\\&\equiv\ket{x_1} ....\ket{x_{r-1}}\ket{v_r\neq d_r}\ket{u_{r+1}=1}\ket{x_{r+2}}....\ket{x_N}\\
&\mbox{and~~}|{\phi}_0\rangle
\equiv \ket{v_1 =d_1}\ket{v_2 =d_2} ........\ket{v_n =d_N},~~~~~~&\nonumber\eea
where $|x_l\rangle$ is any state of the $l$-th subsystem.
It is obvious that all the product states given in Eq. (\ref{product}) are not linearly independent. Let there be only $s$ linearly independent product states $\{\ket{\phi_i}\}_{i=1}^s$ of the form given in Eq. (\ref{product}). It is not very difficult to see that $\ket{\phi_0}$ is orthogonal to all the states given in Eq. (\ref{product}). Thus, states $\{\ket{\phi_{i}}\}_{i=0}^s$ are all linearly independent states and span a $(s+1)$-dim. subspace $\mathbb{S}$ of $\mathcal{H}_1^{d_1}\otimes \mathcal{H}_2^{d_2}\otimes.....\otimes{H}_N^{d_N}$. Here $s+1\leq d_1d_2.....d_N-1$, as $\ket{\phi}=\ket{u_1=1}\ket{u_2=1}....\ket{\hat{u}_N=1} \not\in \mathbb{S}$.

To satisfy the conditions given in Eqs. (\ref{HardyMGG}), a state $\rho$ has to be confined to the subspace of $\mathcal{H}_1^{d_1}\otimes \mathcal{H}_2^{d_2}\otimes.....\otimes{H}_N^{d_N}$, which is orthogonal to $\mathbb{S}$, call it a Hardy subspace $\mathbb{S}^{\perp}$. Thus, any state $\rho\in \mathbb{S}^{\perp}$ with $\langle \phi |\rho|\phi\rangle \neq 0$ will satisfy conditions (\ref{HardyMGG}). If one can show that $s$ can be big enough to have $s+1= d_1d_2.....d_N-1$, then the Hardy state $\rho$ is pure. Below we give examples for which this is the case.
\subsection{3-qubit Hardy-type state} Let us find the set of states $\rho$ for which the conditions for our Hardy-type argument given by conditions (\ref{HardyMGG}) are satisfied for a given set of three observable pairs $(\hat{u}_j, \hat{v}_j)$ ($j = 1, 2,3$). Take (for all $j = 1, 2,3$):
\begin{equation*}
\label{basischoice}
\begin{array}{lcl}
|\hat{u}_j = 1\rangle &=& \alpha_j|\hat{v}_j = 1\rangle +
\beta_j|\hat{v}_j
= 2\rangle,\\
|\hat{u}_j = 2\rangle &=& \beta_j^*|\hat{v}_j = 1\rangle -
\alpha_j^*|\hat{v}_j = 2\rangle,
\end{array}
\end{equation*}
where
$|\alpha_j|^2 + |\beta_j|^2 = 1$ and $0 < |\alpha_j|, |\beta_j|<1$.
The last condition is due
to the non-commutativity of $\hat{u}_j$ and $\hat{v}_j$. Linearly independent product states associated with the zero probabilities of conditions (\ref{HardyMGG}) are:
\ben
\begin{array}{lcl}
\ket{\phi_0}&=&\ket{v_1=2}\ket{v_2=2}\ket{v_3=2},\\
\ket{\phi_1}&=&\ket{v_1=1}\ket{u_2=1}\ket{u_3=1},\\
\ket{\phi_2}&=&\ket{v_1=1}\ket{u_2=1}\ket{u_3=2},\\
\ket{\phi_3}&=&\ket{u_1=1}\ket{v_2=1}\ket{u_3=1},\\
\ket{\phi_4}&=&\ket{u_1=2}\ket{v_2=1}\ket{u_3=1},\\
\ket{\phi_5}&=&\ket{u_1=1}\ket{u_2=1}\ket{v_3=1},\\
\ket{\phi_6}&=&\ket{u_1=1}\ket{u_2=2}\ket{v_3=1}.\end{array}\een
The product state associated with the first condition reads
$\ket{\phi_7}=\ket{u_1=1}\ket{u_2=1}\ket{u_3=1}$.

State $\rho$ that corresponds to conditions (\ref{HardyMGG}), has to be confined to a subspace of $\mathcal{C}^{2}\otimes \mathcal{C}^{2}\otimes{C}^{2}$, which is orthogonal to the subspace $\mathbb{S}=\{\ket{\phi_i}\}_{i=0}^6$. However, it's not orthogonal to the product state $\ket{\phi_7}$. The subspace $\mathbb{S}$ has dimension seven, so $\rho$ must be a pure genuine 3-qubit entangled state, which we denote as $\ket{\psi}$. As one can see, all the eight product states $\{\ket{\phi_i}\}_{i=0}^7$ are linearly independent, hence by using the Gram-Schmidt orthonormalization procedure one can find an orthonormal basis $\{\ket{\phi'_i}\}_{i=0}^7$, in which state $\ket{\psi}$ is its last member, with $i=7$:
\ben
\begin{array}{lcl}
\ket{\phi'_0} = \ket{\phi_0},~\ket{\phi'_i} = \frac{\ket{\phi_{i}}-\sum^{i-1}_{j=0}\langle \phi'_j|\phi_{i}\rangle\ket{\phi'_j}}
{\sqrt{1-\sum^{i-1}_{j=0}|\langle \phi'_j|\phi_{i}\rangle|^2}}, \mbox{~for $i=1,...,7$}.\\
\end{array}
\een
The probability $q$ in the conditions (\ref{HardyMGG}), for the Hardy state, reads
\ben\label{value_q}
q = |\langle \psi |\phi_7\rangle|^2 = 1-\sum_{i=0}^6|\langle \phi'_i|\phi_7\rangle|^2=
\frac{|\alpha_1\alpha_2\alpha_3|^2|\beta_1\beta_2\beta_3|^2}{1-|\alpha_1\alpha_2\alpha_3|^2}.
\een
Its maximum possible value is $0.0181938$. Further examples of Hardy states for bipartite cases can be found in \cite{PG11}.

For qubit systems the modified Hardy's conditions (\ref{HardyMGG}) can be expressed as
\begin{eqnarray}
&P(+,+,...,+|\hat{u}_1,\hat{u}_2,...,\hat{u}_N)=q>0,& \label{hardy1}\\
&\forall~r\leq N:~P(+,+|\hat{v}_r,\hat{u}_{r+1})=0,&\label{hardy2}\\
&P(-,-,...,-|\hat{v}_1,\hat{v}_2,...,\hat{v}_N)=0.& \label{hardy3}
\end{eqnarray}
\begin{lem}
Only a {\em unique} pure genuinely entangled $N$-qubit state satisfies (\ref{hardy1}-\ref{hardy3}).
\end{lem}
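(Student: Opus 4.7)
The plan is to extend the subspace construction from the main text to the general $n$-qubit setting and prove that the resulting Hardy subspace $\mathbb{S}^{\perp}$ is exactly one-dimensional. Once this is in hand, uniqueness and purity follow immediately: any density matrix $\rho$ satisfying (\ref{hardy1})--(\ref{hardy3}) must annihilate every product state associated with a zero-probability condition, so $\mathrm{range}(\rho)\subseteq \mathbb{S}^{\perp}$, and a one-dimensional range forces $\rho$ to be the rank-one projector onto the (up-to-phase) unique unit vector in $\mathbb{S}^{\perp}$. Genuine $n$-partite entanglement is then inherited from Theorem~1, and the condition $q>0$ becomes automatic since the unique unit vector will turn out to have nonzero overlap with $\ket{u_1=+1}\cdots\ket{u_n=+1}$.

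Concretely, I would adopt the main-text parameterization $\ket{u_j=+1}=\alpha_j\ket{v_j=+1}+\beta_j\ket{v_j=-1}$ and $\ket{u_j=-1}=\beta_j^{*}\ket{v_j=+1}-\alpha_j^{*}\ket{v_j=-1}$ for every qubit, expand
\begin{equation*}
\ket{\psi}=\sum_{x\in\{+,-\}^{n}} c_{x}\,\ket{u_1=x_1}\cdots\ket{u_n=x_n}
\end{equation*}
in the $\hat u$-product basis, and use $\bra{v_r=+1}=\alpha_r\bra{u_r=+1}+\beta_r^{*}\bra{u_r=-1}$ to rewrite each middle condition $P(\hat v_r=+1,\hat u_{r+1}=+1)=0$ as the family, indexed by the $n-2$ remaining coordinates, of two-term linear relations
\begin{equation*}
\alpha_r\, c_{\ldots,\,x_r=+,\,x_{r+1}=+,\,\ldots}+\beta_r^{*}\, c_{\ldots,\,x_r=-,\,x_{r+1}=+,\,\ldots}=0.
\end{equation*}
I would then encode these as edges of a graph $G$ on vertex set $\{+,-\}^{n}$, joining pairs of strings that differ only in coordinate $r$ and share $x_{r+1}=+$.

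The combinatorial heart of the proof is the analysis of $G$. I will show: (i) the all-minus string $(-,\ldots,-)$ is isolated, since every edge requires $x_{r+1}=+$ for some $r$; (ii) every other vertex lies in the connected component of $(+,\ldots,+)$. For (ii) I use a descent: if $x\neq(-,\ldots,-)$ then by cyclic pigeonhole there is an index $r$ with $x_{r-1}=-$ and $x_{r}=+$, and the $(r{-}1)$-th relation at $x$ connects it to a string with strictly fewer $-$'s; iterating terminates at $(+,\ldots,+)$. Since the multiplicative ratio $-\alpha_r/\beta_r^{*}$ attached to each edge depends only on $r$, any cycle in $G$ yields product $1$, so the system of relations from (\ref{hardy2}) is globally consistent. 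Solving it pins every $c_{x}$ in the large component to a fixed nonzero multiple of $c_{+,\ldots,+}$, leaving precisely two free parameters, $c_{+,\ldots,+}$ and $c_{-,\ldots,-}$.

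Finally, expanding condition (\ref{hardy3}) via $\bra{v_j=-1}=\beta_j\bra{u_j=+1}-\alpha_j^{*}\bra{u_j=-1}$ gives a single linear equation whose coefficient of $c_{-,\ldots,-}$ is $(-1)^{n}\prod_{j}\alpha_j^{*}\neq 0$; it is therefore not implied by the constraints from (\ref{hardy2}) and cuts the $2$-dimensional solution space down to $1$. The anticipated main obstacle is the graph step---in particular, making the cyclic descent rigorous and checking the cycle-consistency at the level of the full $\{+,-\}^{n}$---while that consistency is essentially automatic thanks to the factorization of edge weights through $r$ alone, and the independence of condition (\ref{hardy3}) is an immediate algebraic check.
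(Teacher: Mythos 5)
Your proof is correct, but it takes a genuinely different route from the paper's. The paper argues ``dually'': it works with the span $\mathcal{S}$ of the product vectors attached to the zero-probability conditions, builds out of them (together with $\ket{\phi_+}=\ket{0}^{\otimes n}$) an explicit, generally non-orthogonal basis of the whole $2^n$-dimensional space, concludes $\dim\mathcal{S}=2^n-1$, and hence that any admissible $\rho$ is the rank-one projector onto the one-dimensional $\mathcal{S}^{\perp}$. You instead solve for the state directly: you expand $\ket{\psi}$ in the $\hat u$-eigenbasis, convert each instance of (\ref{hardy2}) into a two-term relation $\alpha_r\,c_{x}+\beta_r^{*}\,c_{x'}=0$ between strings differing only at position $r$ with $x_{r+1}=+$, and use connectivity of the associated graph to show that the solution space of these relations is exactly two-dimensional --- the all-$+$ component propagates to every vertex except the isolated all-$-$ string --- after which (\ref{hardy3}) removes one more dimension because its coefficient of $c_{-\cdots-}$ equals $(-1)^n\prod_j\alpha_j^{*}\neq 0$. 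Your two combinatorial worries are both harmless for the reasons you give: the descent along a $-\,+$ cyclic adjacency strictly decreases the number of minus signs, and cycle consistency is automatic because the single assignment $c_x=c_{+\cdots+}\prod_{r:\,x_r=-}(-\alpha_r/\beta_r^{*})$ satisfies every edge relation at once. Compared with the paper's proof, yours buys an explicit closed form for the unique state and makes $q=|c_{+\cdots+}|^2>0$ manifest rather than requiring a separate overlap argument; the paper's subspace-spanning construction, on the other hand, stays closer to the general method of the main text and is the version that ports more directly to higher-dimensional subsystems.
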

\begin{proof}
We show the uniqueness proof only, the proof for the genuine $N$-qubit entanglement follows from Theorem \ref{genuine}. Let us denote the eigenstates of $\hat{u}_j$ ($\hat{v}_j$) with eigenvalue $+1$ and $-1$ by $\ket{0_j}(\ket{+}_j)$ and $\ket{1_j}(\ket{-}_j)$, respectively. {\em Take $N$-qubit product states for which the measurements specified in Eqn. (\ref{hardy2}) give a probability equal to $1$, which belong to the following family:
\be\label{product1} \begin{split}&|{\phi}(x_1,..,x_2,+_r,0_{r+1},x_{r+2},..,x_N)\rangle \\& \equiv\ket{x_1}....\ket{x_{r-1}}\ket{+_r}\ket{0_{r+1}}\ket{x_{r+2}}....\ket{x_N},\end{split}\ee
where $\ket{x_j}\in\{\ket{0_j},\ket{1_j},\ket{+}_j,\ket{-}_j\}$, and the product state
\be\label{product2}|{\phi}_0\rangle
\equiv \ket{-}\ket{-} ........\ket{-}.\ee
for which $P(-,-,..,-|\hat{v}_1,\hat{v}_2,..,\hat{v}_N)=1.$ }
{\em Consider additionally a product state
\be\label{product3}
\ket{\phi_+}=\ket{0}\ket{0} ....\ket{0}....\ket{0} \mbox{ (or, simply $\ket{00...0...0}$)},
\ee
for which $P(+,+,..,+|\hat{u}_1,\hat{u}_2,..,\hat{u}_N)=1$, compare Eqn. (\ref{hardy1})}. Note that the product states given in Eqns. (\ref{product1}-\ref{product3}) are not linearly independent. Let's define a new product basis:
\begin{widetext}\ben\begin{split}
\ket{00...0...0}&=\ket{\phi_+}\\
\ket{00...01_l0...0}&=\frac{1}{\beta_l}\left[\ket{{\phi}_k(0,..,0,+_l,0,...,0)}-
\alpha_l\ket{\phi_+}\right],~\forall l,\\
\ket{0...01_l0...01_m0...0}&= \frac{1}{\beta_l\beta_m}\biggl[\ket{{\phi}_k(0,...,0,+_l,0,...,0,+_m,0,...,0)}-
\alpha_l\alpha_m\ket{\phi_+}\\&-\beta_l\alpha_m\ket{00...01_l0...0}
-\alpha_l\beta_m\ket{00...01_m0...0}\biggr],~\forall l\neq m,\\
\ket{0...01_l0...01_m0...01_k0...0}&=\frac{1}{\beta_l\beta_m\beta_k}\biggl[\ket{{\phi}_k(0,...,0,+_l,0,...,0,+_m,0,...,0,+_k,0,...,0)}-
\alpha_l\alpha_m\alpha_k\ket{\phi_+}
-\alpha_l\alpha_m\beta_k\ket{00...01_k0...0}\\&-\alpha_l\beta_m\alpha_k\ket{00...01_m0...0}
-\beta_l\alpha_m\alpha_k\ket{00...01_l0...0}
-\alpha_l\beta_m\beta_k\ket{00...01_m0...01_k0...0}\\&-\beta_l\alpha_m\beta_k\ket{00...01_l0...01_k0...0}
-\beta_l\beta_m\alpha_k\ket{00...01_l0...01_m0...0}\biggr],~\forall l\neq m\neq k\neq l,\\&
......\\
\ket{11...1...1}&=\frac{(-1)^N}{\prod_{i=1}^N{\alpha^*_i}}\Biggl[\ket{\phi_0}-\Biggl{\{}\left(\prod_{j=1}^N\beta^*_j\right)\ket{\phi_+}
+(-1)^1\sum_{i=1}^N\alpha^*_i\left(\prod_{j=1,j\neq i}^N\beta^*_j\right)\ket{00...01_i0...0}+\\
&+(-1)^2\sum_{i,l=1, i\neq l}^N\alpha^*_i\alpha^*_l\left(\prod_{j=1,j\neq i,l}^N\beta^*_j\right)\ket{00...01_i0...01_l0...0}+..........\\&+(-1)^{N-1}\sum_{j=1}^N\beta^*_j\left(\prod_{i=1,i\neq j}^N\alpha^*_i\right)\ket{11...10_j1...1}\Biggr{\}}\Biggr],
\end{split}\een\end{widetext}
where $\ket{+}_j=\alpha_j\ket{0}_j+\beta_j\ket{1}_j$, and $\ket{-}_j=\beta^*_j\ket{0}_j-\alpha^*_j\ket{1}_j$ with $|\alpha_j|^2+|\beta_j|^2=1$. {\em As we were able to construct the basis, product states given in Eqns. (\ref{product1}-\ref{product3}) must span the full $2^N$-dimensional Hilbert space. Let $\mathcal{S}_1=\{\ket{\phi (x_1,..,x_2,+_r,0_{r+1},x_{r+2},..,x_N)}\}$, be the subspace spanned by the product states given in Eqn. (\ref{product1}). Since $\ket{\phi_+}\not\perp \ket{\phi_0}$, whereas $\ket{\phi_0}\perp \mathcal{S}_1$, therefore, $\ket{\phi_+}\not\in \mathcal{S}_1$. However, since $\ket{\phi_0}$ and $\ket{\phi_+}$ are linearly independent and $\mathcal{S}_1\cup\{\ket{\phi_0},\ket{\phi_+}\}$ spans the full $2^N$-dimensional Hilbert space, therefore, $\mathcal{S}_1$ must be a $(2^N-2)$-dimensional, hence the dimension of the subspace $\mathcal{S}$, which it the one spanned by $\mathcal{S}_1\cup\{\ket{\phi_0}\}$ must be $2^N-1$. Thus, to satisfy the conditions (\ref{hardy1}-\ref{hardy3}) a state $\rho$ has to be orthogonal to the subspace $\mathcal S$. But, the subspace orthogonal to $\mathcal S$ is one dimensional. Thus, $\rho$ is a unique pure state. }
\end{proof}
Thus, an important feature of original Hardy-type two-qubit argument is preserved. This feature is missing in most of other multipartite Bell-type tests and totally absent in the case of generalized Hardy-type argument (\ref{HardyG}) for more than two-qubit case.

{\em Conclusions.} In summary, our modified Hardy-type test does not involve any statistical inequality. For $N$ qubit systems, we prove the uniqueness and purity of the Hardy state, and for the general case its genuine $N$-partite entanglement. Finally we remark that we also studied the maximum probability of success, $q$, of the modified Hardy-type test (\ref{HardyMGG}) for three two-level systems under a {\em generalized non-signaling theory (GNST)} and in quantum theory. We found that the maximum value of the probability for quantum theory is $0.0181938$, and for {\em GNST} it is 1/3. Interestingly, for both cases maximal $q$ is lower than for two two-level systems. Whereas, the maximum probability of success of conventional Hardy-type argument (\ref{HardyG}) in {\em GNST} for both two two-level systems and three two-level systems turns out to be $\frac{1}{2}$ \cite{CGKKRR10}.

On the course of preparing this work we have learned about other generalization of Hardy paradox for qubits system \cite{OH}, greatly inspired by our result, presented in \cite{1303.0128}, an arXiv e-print superseded by this manuscript. This shows that manipulating the conditions used in the Hardy paradox allow to choose a class of multi-qubit entanglement detected, but the key feature is to refer to marginal probabilities.
\section{Acknowledgments}
We thank Guruprasad Kar, Sibasish Ghosh and Mohamed Bourennane for stimulating discussions. R.R. acknowledges support
by Foundation for Polish Science (FNP) TEAM/2011-8/9 project co-financed by EU European Regional Development Fund, and ERC grant QOLAPS(291348). M.W. acknowledges support from FNP (HOMING PLUS/2011-4/14) and NCN Grant No. 2012/05/E/ST2/02352. M.\.Z. was supported by project QUASAR.

\end{document}